\numberwithin{equation}{section}
\theoremstyle{plain}
\newtheorem{theorem}{Theorem}[section]
\newtheorem{proposition}{Proposition}[section]
\newtheorem{corollary}{Corollary}[section]
\newtheorem{lemma}{Lemma}[section]
\newtheorem{assumption}{Assumption}
\newtheorem{example}{Example}[section]
\newcommand{\nesq}{\subseteq_{\emptyset}}
\newcommand{\nes}{\subset_{\emptyset}}
\newcommand{\I}{\mathcal{I}}
\newcommand{\Yobs}{Y^{\mathrm{obs}}}
\newcommand{\Ymis}{Y^{\mathrm{mis}}}
\newcommand{\YV}{Y^{{\cal{V}}}}
\begin{document}
	\title{Causal inference for binary non-independent outcomes}
	\author{Monia Lupparelli$^\dag$ and Alessandra Mattei$^\ddag$\\
		$^\dag$University of Bologna and  $^\ddag$University of Florence\\ 	
		$^\dag$email: monia.lupparelli@unibo.it\\
		  $^\ddag$email: mattei@disia.unifi.it}
	\date{\today}
	\maketitle

\begin{abstract}
Causal inference on multiple non-independent outcomes  raises  serious challenges,
because multivariate techniques that properly account for the outcome's dependence structure need to be considered.
We focus on the case of binary outcomes framing our discussion in the potential outcome approach to causal inference.
We define causal effects of treatment on joint outcomes introducing  the notion of product outcomes.
We also discuss a decomposition of the causal effect on product outcomes into  marginal and  joint causal effects, which respectively provide information  on treatment effect  on the  marginal (product) structure of the product outcomes and on the outcomes' dependence structure.
We propose a log-mean linear regression approach for modeling the distribution of the potential outcomes, which is particularly appealing because all the causal estimands of interest and the decomposition into marginal  and joint causal effects can be easily derived  by model parameters.  The method is illustrated 
in two randomized experiments concerning (i)  the effect of the administration of  oral pre-surgery morphine on pain intensity  after  surgery; and  (ii)  the effect of honey on nocturnal cough and sleep difficulty associated with childhood upper respiratory tract infections.
\end{abstract}
{\textbf{Keywords}: Causal relative risks;
Log-mean linear models; Marginal and joint causal effects; Potential outcomes; Product outcomes; Rubin Causal Model}

\section{Introduction} 
Causal studies involving multivariate outcome variables are increasingly widespread in real-world applications: 
intervention studies in many fields routinely collect information on multiple outcomes. Recently, a strand of the causal inference literature has been working on using multiple outcomes, possible coupled with conditional independence assumptions, to address identification problems in causal studies with intermediate variables \citep{MLM:2013, MealliPacini:2013, MercatantiLiMealli:2015, MealliPaciniStanghellini:2015}. 
In these studies focus is on causal effects on a single response variable, which is viewed as the outcome of main interest, and additional outcomes are used as auxiliary variables for inferential purposes. 

In this paper we consider a different type of studies, where   
focus is on causal effects on a multivariate response variable, and thus the whole  vector of outcomes is the response variable of main interest. Assessing causal effects on multivariate outcomes presents unique challenges, because  causal effects on joint sets of endpoint outcomes need to be properly defined,  and multivariate inferential methods for identifying  and estimating those causal estimands that also properly account for the outcomes' dependence structure need to be developed.

The existing  causal inference literature has rarely focused on assessing causal effects on multivariate response variables. Some exceptions include \cite{JoMuthen:2001}, who conduct 
a joint analysis with two outcomes  in the context of a randomized trial with noncompliance; \cite{HBR:2002}, who focus on estimating causal effects of a time-varying treatment on the mean of a repeated measures outcome using a marginal structural model; 
\cite{FlandersKlein:2015}, who propose a general definition of causal effects, showing how  it can be applied in the presence of multivariate outcomes to define  causal effects for   specific sub-populations of units or vector of causal effects; and \cite{LiPeng:2017}, who establish finite population central  limit theorems in completely randomized experiments where the response variable may be multivariate and  
causal estimands of interest are defined as linear combinations of the potential outcomes.

We focus on assessing causal effects of a treatment on multiple binary outcomes. 
The binary nature of the outcomes raises further challenges. When outcomes are binary the definition of a measure of association is tricky and requires to account for several critical aspects.   The dependence structure characterizing categorical variables is usually hard to investigate because pairwise associations do not provide a complete picture of it, but higher order associations need to be considered.   Also, exploring the parameter space, which consists of joint probabilities, is awkward especially because its dimension increases exponentially as the number of variables increases.  

The main  contribution of the paper consists in providing a novel and appealing framework for   drawing causal inference for binary multivariate outcomes. Specifically, we discuss and address the following issues. 
First, we formally define causal effects on multiple binary outcomes adopting the potential outcome approach to causal inference,  commonly referred to as Rubin's Causal Model \cite[RCM, e.g.,][]{Rubin:1974, Rubin:1977, Rubin:1978}. See also \cite{ImbensRubin:2015} for a comprehensive overview of the potential outcome approach. Specifically, we focus on causal relative risks, that is, ratio of probabilities of success corresponding to potential outcomes under different treatment conditions on a common set of units. 
Second, to formally define the causal estimands of interest,
we introduce new binary outcomes,  defined as function (product) of subsets of outcomes,  that we call product outcomes.
We  propose a decomposition of the causal relative risks for product outcomes  into two components: 
one representing causal effects on marginal outcomes, and the other representing causal effects on the outcomes' dependence structure. This decomposition may provide valuable information on how the treatment acts, revealing whether treatment effects on the multivariate outcome are mainly either through  treatment effects on (subsets of) marginal outcomes or through  treatment effects on the outcomes' dependence structure. 
Third, we propose to model the joint distribution of potential outcomes (conditional on a set of pre-treatment variables) using the class of log-mean linear regression models introduced by \citet{lup-rov:2016}. We generalize and extend results in \citet{lup-rov:2016} in order to properly account for the fact that here  models are specified for the potential outcomes, rather than observed outcomes.  We  show that  the model parameters are directly interpretable in term of the causal relative risks we are interested in and that they can be combined to derive a natural and easily interpretable decomposition of the causal relative risks for product outcomes.

We illustrate our framework in two medical examples, to which we refer as the morphine study and 
the honey study throughout the paper. The morphine study is a prospective, randomized, double-blind clinical study aimed at evaluating the effect of preoperative administration of oral morphine sulphate on postoperative pain relief \cite[see][for details]{Borracci2013}.   We use  this study to illustrate the key concepts throughout the paper.

The honey study  is a clinical randomized experiment aimed at evaluating the effect of buckwheat honey or honey-flavored dextromethorphan (an over-the-counter drug) versus no treatment on nocturnal cough and sleep difficulties associated with  childhood upper respiratory tract infections \citep{paul-al:2007}.

\section{Theoretical Framework} \label{sec:rcm}
\subsection{Basic setup}
Given a finite set $V=\{1,\dots,p\}$,  let $Y_V=(Y_v)_{v \in V}$ be the vector of binary outcomes of interest. Every single outcome takes level 1 in case of success, and level 0 in case of failure; then, the full vector $Y_V$ takes value $y_V \in \mathcal{I}_V=\{0,1\}^{|V|}$, where $|V|=p$ is the cardinality of the set $V$. For every $D \subseteq V$, $Y_D$ is a marginal vector of outcomes such that $Y_D=1_D$, if $Y_v=1$ for all $v \in  D$, $Y_D=0_D$, if $Y_v=0$ for all $v \in  D$, and it takes  any other value  $y_D \in \mathcal{I}_{D}=\{0,1\}^{|D|}$, with $y_D \neq 1_D,0_D$, otherwise; where  $1_D$ and $0_D$ are two vectors of 1s and 0s of size $|D|$.
For every multiple outcome $Y_D=(Y_v)_{v \in D}$ with $D \subseteq V$, we refer to the event $Y_D=1_D$ as a joint success, and to the event defined by any other level $y_D \neq 1_D$ as a  joint failure. Notice that, among the joint failure events, we do not distinguish between $Y_D=0_D$ and $Y_D=y_D$  for any $y_D \neq 1_D,0_D$, because both cases do not represent the event of interest, that is, a  joint success.

In this work we are interested in assessing effects of a treatment both on single variables, $Y_v$, $v \in V$, as well as on joint  variables $Y_D$ with $D \subseteq V$ and $|D|>1$. Specifically, we are interested in assessing treatment effects on the occurrence of a joint success $Y_D=1_D$, for every $D \subseteq V$. To this aim we make use of a new set of variables that we call $D$-product outcomes.

%\begin{definition}
	Given a random vector $Y_V=(Y_v)_{v \in V}$ of binary outcomes, for every non-empty subset $D$ of $V$, 	the $D$-product outcome is defined as follows:
	\begin{equation}
	Y^D=\prod_{v \in D} Y_v.
	\end{equation}	
%\end{definition}

Let ${\cal{V}}$ be the power set of $V$ minus the empty-set and let $\YV=(Y^D)_{D \subseteq V, D \neq \emptyset}$  denote the vector of all product outcomes. For sake of simplicity, in the sequel we adopt the shorthand notation $D \nesq V$ and $D \nes V$ to denote any subset (or proper subset) $D$ of $V$ not equal to the empty-set.
For any pair $Y^D$ and $Y^{D'}$  with $D' \nes D$, we say that $Y^{D'}$ is a nested product outcome of $Y^D$. Note that each $Y^D$  is a binary variable which takes level 1 if $Y_D =1_D$, and level 0 otherwise, that is, $Y^D=1$ if and only if a joint success realizes for the outcome variable $Y_D$. Assessing treatment effects on $D$-product outcomes, $Y^D \in \YV$, represents the main focus of our work. See also \citet{lup-rov:2016} who adopt similar product variables for different purposes.

It is straightforward to figure out that, in case of non-independent outcomes, effects of the treatment on $Y^D$ cannot be investigated by only exploiting  information about treatment effects on single outcomes: the treatment may affect $Y^D$ both through its effect on each single outcome, $Y_v$, $v \in D$, as well as through its effect on the association structure among the variables belonging to $Y_D$. In order to formalize these concepts we need to define causal effects  introducing  a formal framework for causal inference.  We adopt the potential outcome approach to causal inference \citep[][]{Rubin:1974, Rubin:1977, Rubin:1978}. %ImbensRubin:2015} 

\subsection{Potential Outcomes}
Consider a group of units each of which can potentially be assigned to a binary treatment $w$, with $w = 1$ for active treatment and $w = 0$ for control. We take a super-population perspective,  considering the $n$ observed units  as a random sample from an infinite super-population.
Under the stable unit treatment value assumption \cite[SUTVA,][]{Rubin:1980}, which rules out both hidden versions of treatments  as well as interference between units, we can define for each outcome variable, $Y_v$, $v \in V$, two potential outcomes for each unit. 
Let $Y_{v}(0)$ denote the value of $Y_v$ under treatment $w=0$, and let $Y_{v}(1)$ denote
the value of $Y_v$  under treatment $w=1$. Let $Y_V(w)=(Y_v(w))_{v \in V}$  be the random vector including potential outcomes for every variable under treatment level $w$, $w=0,1$. 
Potential outcomes for $D$ product outcomes need to be introduced, too. 
%\begin{definition}
Let $Y^{\cal{V}}(w)=(Y^D(w))_{D \nesq  V}$ be the random vector of $D$-product potential outcomes where, for every non-empty subset $D$ of $V$, 
\begin{equation}\label{eq:D-potential-outcome}
Y^D(w)=\prod_{v \in D} Y_v(w).
\end{equation}	
%\end{definition}

Every $Y^D(w)$ is a binary random variable which takes level 1 if $Y_D(w)=1_D$, and level 0 otherwise. For the special case with $|D|=1$,  the $D$-product potential outcome $Y^D(w)$ coincides with a potential outcome $Y_v(w)$, for a certain $v \in V$.  Then $\YV(w) =(Y^D(w))_{D \nesq V}$ is  the augmented vector combining the vector $Y_V(w)$ with  all the $D$-product potential outcomes $Y^D(w)$ for any  $D \nesq V$. 

\begin{example} Morphine study. Let  $Y_V(w)=(Y_1(w), Y_2(w))$ be a bivariate vector, with $Y_1(w)$ and $Y_2(w)$ denoting  pain intensity after surgery at rest and  on movement, respectively ($0=$ high; $1=$ low) under treatment level $w$, with $w=0$ for the placebo treatment and $w=1$ for the preoperative morphine treatment. Then, we have $\YV(w) = (Y_1(w), Y_2(w), Y^{\{1,2\}}(w))$, where $Y^{\{1,2\}}(w)=  Y_1(w)\cdot Y_2(w)$ is a binary variable equal to one for patients with a low level of pain intensity both at rest and  on movement under  treatment level $w$.
\end{example}

In our analysis we assume that a set of individual covariates is also available, which are collected in a vector $X_U$ with $U=\{1,\dots,q\}$ defining the finite set of indexes for the  covariates. In this context, without loss of generality, we consider binary covariates such that $X_U=x_U$, with $x_U \in \mathcal{I}_U=\{0,1\}^{q}$. Nevertheless the generalization for the inclusion of continuous covariates is conceptually straightforward.

\section{Causal estimads}\label{sec:causal}
\subsection{Causal relative risks}
%Under the potential outcome approach,  to causal inference, 
In the potential outcome approach,  causal effects are defined as comparisons of potential outcomes under  different treatment levels for a common set of units. For instance, a causal effect of the treatment $w=1$ versus treatment $w=0$
on a single outcome $Y_v$ is defined as a comparison of the potential outcomes $Y_v(1)$ and $Y_v(0)$ on a
common set of units. 

In this paper we focus on  causal relative risks. The  causal relative risk for a specific outcome $Y_v$ is defined as follows:
\begin{equation}\label{eq:RRv}
RR_{v}=\dfrac{P[Y_v(1)=1]}{P[Y_v(0)=1]}, \qquad v \in V.
\end{equation}
%For example, suppose we are interested in evaluating the causal effect of the morphine treatment on pain intensity on movement.  The causal relative risk provides information on whether the morphine treatment has relieving or harmful effects on  pain intensity on movement, increasing or decreasing the probability that post-operative  pain intensity on movement is low.

Sometimes  the interest is on casual effects for specific sub-populations defined in terms of a set $X_U$ of covariates, that is, on conditional causal effects \cite[e.g.,][]{Imbens:2004, AtheyImbens:2015}.  For example, we may be interested in the causal relative risk of the morphine treatment 
on post-operative pain intensity on movement for male and female, separately. 
Then, the  relative risk of the treatment on an  outcome, $Y_v$, given a fixed level $x_U$ of the covariates, is  
\begin{equation}\label{eq:single-causal-effect}
RR_{v\mid x_U}=\frac{P[Y_v(1)=1\mid X_U=x_U]}{P[Y_v(0)=1\mid X_U=x_U]}, \qquad v \in V, \;\; x_U \in \mathcal{I}_U.
\end{equation}

Throughout the paper, we will focus on the causal relative risks in Equation \eqref{eq:single-causal-effect} and we define new causal estimands  conditional on  values $x_U \in \mathcal{I}_U$ of the covariate set $X_U$, because they may provide precious information on the effectiveness of the treatment across sub-populations defined by the values of the covariates.  Nevertheless marginal effects can be derived  marginalizing over $X_U$.

%\begin{definition}
For any product outcome $Y^D$, let
%\begin{equation}\label{eq:D-product-M-causal-effect}
%RR_{D}=\frac{P[Y^D(1)=1]}{P[Y^D(0)=1]}, \qquad D  \subseteq V,
%\end{equation}
%be the $D$-product marginal relative-risk, and let
\begin{equation}\label{eq:D-product-causal-effect}
RR_{D\mid x_U}=\frac{P[Y^D(1)=1\mid X_U=x_U]}{P[Y^D(0)=1\mid X_U=x_U]}, \qquad D  \subseteq V, \;\; x_U \in \mathcal{I}_U.
\end{equation}
be the $D$-product  relative risk for a given value, $x_U$, of a set $X_U$ of covariates.
%\end{definition}

In the special case when $|D|=1$, Equation \eqref{eq:D-product-causal-effect}   coincides with  Equation \eqref{eq:single-causal-effect}. Also, we adopt the convention $RR_{\emptyset\mid x_U}=1$,  so that, when $D$ is used to index the relative risk $RR_{D\mid x_U}$ rather than  a product outcome $Y^D$, we can avoid to specify $D \neq \emptyset$. 

\subsection{Marginal and joint causal effects}
For any product outcome $Y^D \in \YV$,  we  propose to distinguish between two different causal effects that we  call the marginal  effect and the joint  effect of the treatment. The former  accounts for the  effect deriving from the  product-structure of $Y^D$, which necessarily embodies information provided by causal effects on marginal product  outcomes $Y^{D'}$, for all $D' \nes D$.  The latter accounts for the  effect of the treatment on   the association structure of the joint distribution of $Y_D$.

It is reasonable to expect that  causal effects in Equation \eqref{eq:D-product-causal-effect} are  a combination of  marginal and joint effects. 
For instance,  the effect of the morphine treatment on post-operative pain intensity at rest and on movement, $Y^{\{1,2\}}$, combines the marginal effect of the  treatment on each single outcome with the joint effect on their association. 

In order to formally  address these concepts,  we introduce two additional causal estimands. Given any causal estimand  $\theta_{D\mid x_U}$ for the product-outcome $Y^D$, the  marginal causal effect ($MCE$)  is defined as
\begin{equation}
	MCE_{D\mid x_U}= h[(\theta_{D'\mid x_U})_{D' \subset D}],  \qquad D \subseteq V, \label{eq:ICE1} 
	\end{equation}
for a suitable function $h: \mathcal{R}^{2^{|D|}-1} \rightarrow \mathcal{R}$.
The joint causal effect  ($JCE$)  is defined as comparison  of an association measure $g(\cdot)$  between the joint distributions of $Y_D(1)$ and $Y_{D}(0)$:
	\begin{eqnarray}  \label{eq:ECE1}
\lefteqn{\qquad JCE_{D\mid x_U} : }\\& & \quad g[P(Y_{D}(1)\mid  X_U=x_U)] \quad versus \quad g[P(Y_{D}(0)\mid  X_U=x_U)], \quad D \subseteq V. \nonumber
	\end{eqnarray}

%The function $h(\cdot)$ provides a combination of causal effects on all nested product outcomes $Y^{D'}$ with $D' \nes D$. Instead, the function $f(\cdot)$ depends on the   measure of association computed for the distribution of the two potential outcomes and provides information about the effect of the treatment on the dependence structure of  $Y_D$. The value taken by Equation \eqref{eq:ECE1} when $g[P(Y_{D}(1)\mid  X_U)]=g[P(Y_{D}(0)\mid  X_U)]$ is assumed as reference value for no joint effect because the dependence structure of $Y_D$ is invariant under active or no active treatment, for each $D \subseteq V$.

Investigating these two causal effects  represents an interesting issue, because they may provide useful insights on how the treatment acts.   To fix the ideas, suppose that for any vector $Y_D(w)$,  $D \subseteq V$, $w=0,1$,  the components $Y_v(w)$, with $v \in D$ are  mutual independent given the covariates. Then,  the  causal relative risk  for every $D \subseteq V$ is
\begin{eqnarray*}
	RR_{D\mid x_U}&=&\dfrac{P[Y^D(1)=1\mid X_U=x_U]}{P[Y^D(0)=1\mid X_U=x_U]} = \dfrac{P[\cap_{v \in D}Y_v(1)=1\mid X_U=x_U]}{P[\cap_{v \in D}Y_v(0)=1\mid X_U=x_U]}  \\&=& \dfrac{\prod_{v \in D} P[Y_v(1)=1\mid X_U=x_U]}{\prod_{v \in D} P[Y_v(0)=1\mid X_U=x_U]}=
	\prod_{v \in D} 	\dfrac{P[Y_v(1)=1\mid X_U=x_U]}{ P[Y_v(0)=1\mid X_U=x_U]}  \\&=& 
	\prod_{v \in D} RR_{v\mid x_U}, 
\end{eqnarray*}
that is, the $D$-product  relative risk $RR_{D\mid x_U}$   is  function of the causal  relative risks $RR_{v\mid x_U}$ for single nested outcomes, for any $v \in D$ and $D \subseteq V$.  
This represents an extreme case where $g[P(Y_{D}(1)\mid  X_U=x_U)]=g[P(Y_{D}(0)\mid  X_U=x_U)]$ because of independence.  In this case there is no joint effect and the causal effect is totally given by the marginal effect.

%In Section \ref{sec:models} we propose a  model for causal inference where  marginal causal effect can be specified as function of causal relative risks and  the  joint causal effect, based on a suitable measure of association, is equal to 1 if there is  no joint effect. It is shown that independence is  a sufficient but not necessary condition for no joint effect; the reference value is achieved   whenever the treatment has no effect on the association structure of $Y_D$. 

\subsection{Observed and Missing Potential Outcomes}
Unfortunately, we cannot directly observe both $Y_V(0)$ and  $Y_V(1)$ for any subject.
After the treatment has taken on a specific level, only the potential outcomes corresponding to that
level are realized and can be actually observed. Formally, let $W$ denote the actual treatment assignment: 
$W = 0$ for units assigned to the control group, and $W = 1$ for units assigned to the treatment group.
We observe $\Yobs_V \equiv Y_V(W) = W \cdot Y_V(1) +  (1-W) \cdot Y_V(0)$, but the other potential outcomes, 
$\Ymis_V \equiv  Y_V(1-W) = (1-W) \cdot Y_V(1) +  W \cdot Y_V(0)$, are missing. 
Therefore, causal inference problems under the  potential outcome approach are inherently missing data
problems, and some assumption on the treatment assignment mechanism is required to draw inference on causal effects.

In what follows, we will maintain the following assumption:	
\begin{assumption} \label{Ass:Random} Random treatment  assignment: 
	$$P\left(W  \mid Y_V(0), Y_V(1), X_U \right) = 	P(W)$$ 
\end{assumption}
Random assignment of the treatment, which usually holds by design in randomized experiments,
can be easily relaxed assuming that treatment assignment is independent of potential outcomes conditional on the observed covariates:  $P\left(W \mid Y_V(0), Y_V(1), X_U \right) = 	P(W \mid X_U)$.

Assumption \ref{Ass:Random} guarantees that the comparison of treated and control units  leads to valid inference on causal effects. 
%For instance, Assumption \ref{Ass:Random} implies that
%$$	RR_{v}=\dfrac{P[Y_v(1)=1]}{P[Y_v(0)=1]} = \dfrac{P[Y_v(1)=1\mid W=1]}{P[Y_v(0)=1\mid W=0]} =\dfrac{P[\Yobs_v =1\mid W=1]}{P[\Yobs_v =1\mid W=0]},$$
%where $P[\Yobs_v =1\mid W=w]$ is the conditional probability that $\Yobs_v=1$ among units actually exposed to treatment $W=w$, $w=0,1$. A similar implication also holds for the  relative risk in Equation \eqref{eq:D-product-causal-effect}. The quantities on the right side of the equality can be easily identified and estimated from the observed data.
%Alternative estimators can be considered, including method-of-moment estimators. 
In this paper we propose a model-based approach to causal inference deriving maximum likelihood estimators of the causal parameters of interest. 
Henceforth, we assume that $Y_V(0)$ and $Y_V(1)$ are independent, conditional on the covariates. This assumption has little inferential effect for causal estimands that do not depend on the association between individual potential outcomes as the super-population causal effects we focus on \cite[see, e.g.,][Chapter 8, for further details]{ImbensRubin:2015}.

\section{A Regression model for multiple binary potential outcomes} \label{sec:models}

\subsection{A multivariate model for multiple potential outcomes}
We assume that  the random vector $Y_V(w)\mid \{X_U=x_U\}$ with $x_U \in \I_U$ for  $w=0,1$, follows a multivariate Bernoulli distribution with probability parameter vector $\pi_{V\mid x_U}(w)=(\pi_{D\mid x_U}(w))_{D \subseteq V}$. The generic element $\pi_{D\mid x_U}(w)$, $w=0,1$,  is the following  joint probability:
$$
\pi_{D\mid x_U}(w)=P(Y_D(w)=1_D, Y_{V \setminus D}(w)=0_{V \setminus D} \mid X_U=x_U), \; D \subseteq V, \; x_U \in \I_U.
$$ 

Let $\mu_{V\mid x_U}(w)=(\mu_{D\mid x_U}(w))_{D \subseteq V}$ be the mean parameter, where the generic element, $\mu_{D\mid x_U}(w)$, is the marginal probability of the event $Y_D(w)=1_D$   given the covariate set $X_U=x_U$:
$$
\mu_{D\mid x_U}(w)=P(Y_D(w)=1_D\mid X_U=x_U), \qquad D \subseteq V, \;\; x_U \in \mathcal{I}_U,
$$ 
with $\mu_{\emptyset\mid x_U}(w)=1$, $w=0,1$.
It follows that the conditional distribution of 
a $D$-product potential outcome, $Y^D(w)\mid \{X_U=x_U\}$,  is an univariate Bernoulli random variable with probability parameter $\mu_{D\mid x_U}(w)$,  $D \subseteq V$ and $w=0,1$. Then, the causal relative risks in Equation \eqref{eq:D-product-causal-effect}  can be also written as function of the mean parameters, for any $x_U \in \mathcal{I}_U$:
\begin{equation}\label{eq:RR-mu}
RR_{D\mid x_U}=\frac{\mu_{D\mid x_U}(w=1)}{\mu_{D\mid x_U}(w=0)}, \qquad D \subseteq V.
\end{equation}

We now introduce the log-mean linear   parameterization developed by \citet{rov-lup-lar:2013}, which is the core of the   regression  framework we use for modeling multiple binary non-independent potential outcomes.

%\begin{definition}\label{def:LML}
	Given the probability distribution of a  random vector $Y_V(w)\mid \{X_U=x_U\}$
	with  mean parameter $\mu_{V\mid x_U}(w)$, $w=0,1$, let $\gamma_{V\mid x_U}(w)=(\gamma_{D\mid x_U}(w))_{D \subseteq V}$ be the log-mean linear parameter vector with
	\begin{equation}\label{eq:LML}
	\gamma_{D\mid x_U}(w)=\sum_{D' \subseteq D}(-1)^{|D \setminus D'|} \log \mu_{D'\mid x_U}(w), \qquad D \subseteq V,\;\; x_U \in \I_U.
	\end{equation}	
%\end{definition}

The term  $\gamma_{D\mid x_U}(w)$, to which we refer as  log-mean linear interaction, represents a measure of association in the joint distribution of $Y_D(w)\mid \{X_U=x_U\}$, for any $D \subseteq V$. 

\subsection{Log-mean linear regression models}

\citet{lup-rov:2016} show that using the log-mean linear parameterization as link function for categorical response variables, the class of log-mean linear regression models results. We extend this approach for modeling the joint distribution of potential outcomes conditional on covariates, $Y_V(w)\mid \{X_U=x_U\}$, $w=0,1$ and $x_U \in \I_U$. Of course alternative model specifications can be considered, 
but we consider this method appealing because the model parameters directly  provide information on the causal estimands introduced in Section \ref{sec:causal}. Moreover the decomposition into marginal and joint effect can be easily expressed as function of   model parameters. 

%The log-mean linear regression model for the conditional distribution of the potential product outcome vector given the set of covariates can be formally defined as follows.  
%
%\begin{definition}
	
Formally, the two saturated log-mean linear regression models for the conditional distribution of each potential outcome  $Y_V(w)\mid X_U$ for $w=0,1$ are given by
	%	\begin{equation}\label{eq:LML-regression}
%	\gamma_{D\mid x_U}(w)= \alpha_D + \alpha_D(w) + \sum_{E \subseteq  U} \alpha_{D\mid E} + \sum_{E\subseteq U}\alpha_{D\mid E}(w), \quad D \subseteq V.
%	\end{equation}
%\end{definition}

%
%We set the  constraints  $\alpha_D(w=0)=0$ and $\alpha_{D\mid E}(w=0)=0$ for every $D \subseteq V$ and $E \subseteq  U$.   Moreover we set $\alpha_{D\mid E}(w=1)= 0$ and $\alpha_{D\mid E}= 0$ for $X_E \neq 1_E$, i.e., when the  configuration of $X_E$  includes some baseline levels, for any $D\subseteq V$ and $E \subseteq U$. Notice that the model specification in Equation \eqref{eq:LML-regression} implies that both $\alpha_D$ and $\alpha_{D\mid E}$ do not depend on the treatment $w=0,1$.  

%It is worth highlighting that Equation \eqref{eq:LML-regression} defines a structural model for the potential outcome vector $Y_V(w)\mid \{X_U=x_U\}$; it is not a model for the observed outcome vector conditional on the observed treatment,  $Y^{obs}_V\mid \{W=w, X_U=x_U\}$, as the model which \citet{lup-rov:2016} focus on. 
%Then, for each $ D \subseteq V$, we have two regression models for the conditional distribution of each potential outcome $Y_D(w)\mid X_U$:	
%\begin{eqnarray}
%	\gamma_{D\mid x_U}(w=0)&=& \alpha_D +  \sum_{E \subseteq  U} \alpha_{D\mid E}, \quad D \subseteq V \nonumber\\ 
%		\gamma_{D\mid x_U}(w=1)&=& \alpha_D + \alpha_D(w=1) + \sum_{E \subseteq  U} \alpha_{D\mid E} + \sum_{E\subseteq U}\alpha_{D\mid E}(w=1), \quad D \subseteq V.\label{eq:LML-regression}
%\end{eqnarray}
 
\begin{equation}\label{eq:LML-regression}
\begin{array}{ccl}
	\gamma_{D\mid x_U}(w=0)&\!\!=\!\!& \alpha_D +  \sum\limits_{E \subseteq  U} \alpha_{D\mid E}, \quad D \subseteq V  	\\
	\gamma_{D\mid x_U}(w=1)&\!\!=\!\!& \alpha_D + \alpha_D(w=1) + \sum\limits_{E \subseteq  U} \alpha_{D\mid E} + \sum\limits_{E\subseteq U}\alpha_{D\mid E}(w=1), \, D \subseteq V.\\
\end{array}
\end{equation}

The causal effect of the treatment on the log-mean linear interaction  can be defined as
\begin{equation}\label{eq:LML-causal effect}
\gamma_{D\mid x_U}(w=1)-\gamma_{D\mid x_U}(w=0)=\alpha_D(w=1) + \sum_{E\subseteq U}\alpha_{D\mid E}(w=1), \quad D \subseteq V.
\end{equation}
In particular, the parameter $\alpha_D(w=1)$ corresponds to the causal effect on the log-mean linear interaction  given the baseline level $x_U=0$, and the parameters $\alpha_{D\mid E}(w=1)$ represent the treatment effect heterogeneity for different covariate configurations, with $E \subseteq U$. For instance, the causal effect  on the log-mean linear interaction  for  $x_E=(1_E,0_{U\setminus E}) \in \mathcal{I}_U$ is 
\begin{eqnarray} \label{eq:LMLcausal-effect-xE}
\lefteqn{\gamma_{D\mid x_E}(w=1)-\gamma_{D\mid x_E}(w=0)= }\\&&
\alpha_D(w=1) + \sum_{E'\subseteq E}\alpha_{D\mid E'}(w=1),  \quad E \subseteq U,\; D \subseteq V. \nonumber
\end{eqnarray}

As far as the remaining parameters are concerned,  $\alpha_D$ represents the intercept and $\alpha_{D\mid E}$ corresponds to the effect of covariates $X_E$ on the log-mean linear interaction given $X_{U\setminus E}=0_{U\setminus E}$,  for any $E \subseteq U$ and $D \subseteq V$.

The following  lemma and theorem show that  we can  derive the causal relative risk  on each product outcome $Y^D$, $D \nesq V$,  defined in Equation \eqref{eq:D-product-causal-effect}, by combining  causal effects  on log-mean linear interactions in Equation \eqref{eq:LML-causal effect}.

\begin{lemma}\label{lemma:betaga-betamu}
	Under the log-mean linear regression models in Equation \eqref{eq:LML-regression}, for the baseline level $x_U=0$ of the covariate set $X_U$, we have
	\begin{eqnarray}\label{eq:LML-alpha}
	\alpha_D(w=1) &=&	    \sum_{D' \subseteq D}(-1)^{| D\setminus D'|}\log RR_{D'\mid x_U=0}, \qquad D \subseteq V.
	\end{eqnarray}
\end{lemma}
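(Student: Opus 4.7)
My plan is to evaluate the two log-mean linear regression equations in \eqref{eq:LML-regression} at the baseline covariate configuration $x_U = 0$ and subtract. At $x_U = 0$, the covariate interaction sums $\sum_{E \subseteq U}\alpha_{D\mid E}$ and $\sum_{E \subseteq U}\alpha_{D\mid E}(w=1)$ collapse to zero by the identifying convention of the parameterization, which is made explicit by \eqref{eq:LMLcausal-effect-xE} applied at $E = \emptyset$ (so $x_E = 0_U$), namely $\gamma_{D\mid x_U=0}(w=1) - \gamma_{D\mid x_U=0}(w=0) = \alpha_D(w=1)$. Consequently, the baseline log-mean linear equations reduce to $\gamma_{D\mid x_U=0}(w=0) = \alpha_D$ and $\gamma_{D\mid x_U=0}(w=1) = \alpha_D + \alpha_D(w=1)$, so that $\alpha_D(w=1)$ can be isolated as the difference on the left-hand side above.

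Next, I substitute the log-mean linear link \eqref{eq:LML} for each of the two $\gamma$'s. Because that link is linear in $\log \mu_{D'\mid x_U}(w)$, the difference at the baseline takes the form
\begin{equation*}
\alpha_D(w=1) = \sum_{D' \subseteq D}(-1)^{|D \setminus D'|}\bigl[\log \mu_{D'\mid x_U=0}(w=1) - \log \mu_{D'\mid x_U=0}(w=0)\bigr],
\end{equation*}
which collapses to $\sum_{D' \subseteq D}(-1)^{|D \setminus D'|}\log\{\mu_{D'\mid x_U=0}(w=1)/\mu_{D'\mid x_U=0}(w=0)\}$. The final step invokes \eqref{eq:RR-mu}, which identifies each of these ratios with the causal relative risk $RR_{D'\mid x_U=0}$, yielding the formula \eqref{eq:LML-alpha}. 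The trivial case $D' = \emptyset$ contributes nothing because $\mu_{\emptyset\mid x_U}(w) = 1$ and $RR_{\emptyset \mid x_U} = 1$ by convention, so $\log 1 = 0$ regardless of the sign $(-1)^{|D|}$.

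The only conceptual step worth a second look is the first — checking that no covariate term survives at $x_U = 0$ in the difference of the regression equations. Once that identifying convention is pinned down through \eqref{eq:LMLcausal-effect-xE}, what remains is a one-line Möbius-type inversion built into the log-mean linear parameterization followed by a direct substitution of the definition of the relative risk; no analytic obstacle arises, and the result is essentially bookkeeping.
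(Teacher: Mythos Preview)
Your proof is correct and follows essentially the same route as the paper: both isolate $\alpha_D(w=1)$ as the baseline difference $\gamma_{D\mid x_U=0}(w=1)-\gamma_{D\mid x_U=0}(w=0)$, expand each $\gamma$ via the log-mean linear link \eqref{eq:LML}, and then identify the resulting log-mean differences with $\log RR_{D'\mid x_U=0}$ through \eqref{eq:RR-mu}. You are slightly more explicit than the paper in justifying why the covariate terms drop out at $x_U=0$ (invoking \eqref{eq:LMLcausal-effect-xE} at $E=\emptyset$ rather than simply citing \eqref{eq:LML-causal effect}), but this is a presentational difference, not a substantive one.
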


The following theorem shows how causal relative risks are given combining model parameters.
\begin{theorem}\label{prop:RR}
	Under the log-mean linear regression models in Equation \eqref{eq:LML-regression}, 
	for any product outcome $Y^D$, the  causal 	relative risk given the baseline level $x_U=0$ of the covariate set $X_U$ is % given by
	\begin{eqnarray}\label{eq:prop}
	RR_{D\mid x_U=0} &=& \exp \left\{\sum_{D' \subseteq D}\alpha_{D'}(w=1)\right\}, \qquad D \subseteq V.\label{eq:LML-RR}
	\end{eqnarray}
\end{theorem}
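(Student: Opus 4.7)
The plan is to derive Theorem \ref{prop:RR} as a direct consequence of Lemma \ref{lemma:betaga-betamu} via \Mob{} inversion on the subset lattice of $D$. Lemma \ref{lemma:betaga-betamu} already expresses $\alpha_D(w=1)$ as an alternating sum of $\log RR_{D'\mid x_U=0}$ over $D' \subseteq D$, which has exactly the form of a \Mob{} transform; the claim of Theorem \ref{prop:RR} is simply the inverted relation (up to exponentiation), so the entire argument is essentially combinatorial bookkeeping.

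More concretely, I would first set $f(D)\equiv\alpha_D(w=1)$ and $g(D)\equiv\log RR_{D\mid x_U=0}$ for $D\subseteq V$, adopting the conventions $RR_{\emptyset\mid x_U=0}=1$ (so $g(\emptyset)=0$) and $\alpha_\emptyset(w=1)=0$ (consistent with $\gamma_{\emptyset\mid x_U}(w)=\log\mu_{\emptyset\mid x_U}(w)=0$). Lemma \ref{lemma:betaga-betamu} then reads $f(D)=\sum_{D'\subseteq D}(-1)^{|D\setminus D'|}g(D')$. The next step is to apply the standard \Mob{} inversion formula on the Boolean lattice of subsets of $D$, which yields $g(D)=\sum_{D'\subseteq D}f(D')$, i.e.
\begin{equation*}
\log RR_{D\mid x_U=0}=\sum_{D'\subseteq D}\alpha_{D'}(w=1), \qquad D\subseteq V.
\end{equation*}
Exponentiating both sides produces the identity in \eqref{eq:prop}.

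If a self-contained verification of \Mob{} inversion is preferred over quoting it, I would substitute the expression from Lemma \ref{lemma:betaga-betamu} into $\sum_{D'\subseteq D}\alpha_{D'}(w=1)$, swap the order of summation, and collect the coefficient of each $\log RR_{D''\mid x_U=0}$: for fixed $D''\subseteq D$ this coefficient is $\sum_{D''\subseteq D'\subseteq D}(-1)^{|D'\setminus D''|}$, which, by the standard identity $\sum_{S\subseteq T}(-1)^{|S|}=\mathbf{1}\{T=\emptyset\}$ applied to $T=D\setminus D''$, equals $1$ when $D''=D$ and $0$ otherwise. This collapses the double sum to $\log RR_{D\mid x_U=0}$, giving the desired identity before exponentiation.

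The main obstacle is essentially notational rather than mathematical: one has to keep the two lattices straight (subsets of $V$ indexing potential outcomes versus subsets of $U$ indexing covariate interactions) and handle the empty-set conventions carefully so that the $D=\emptyset$ degeneracies on both sides match. Beyond that, the result is a one-line application of \Mob{} inversion to Lemma \ref{lemma:betaga-betamu}.
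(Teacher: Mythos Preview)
Your proposal is correct and follows essentially the same approach as the paper: the paper also substitutes Lemma~\ref{lemma:betaga-betamu} into $\sum_{D'\subseteq D}\alpha_{D'}(w=1)$, interchanges the order of summation, and shows that the coefficient of $\log RR_{\tilde D\mid x_U=0}$ vanishes for every $\tilde D\subsetneq D$ via the same alternating-sum identity you invoke. The only difference is cosmetic---you name the underlying mechanism as \Mob{} inversion on the Boolean lattice, whereas the paper carries out the cancellation by hand without naming it.
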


In the sequel, Equation \eqref{eq:LML-RR} is often written as $RR_{D\mid x_U=0} =  \prod_{D' \subseteq D} \exp\{$ $\alpha_{D'}(w=1)\}$,  $D \subseteq V$.
For the special case with $|D|=1$, the  causal effect for a single outcome given the  baseline level of the covariates is $RR_{D\mid x_U=0}= \exp \{\alpha_D(w=1)\}$.

The following corollary formally shows that the  causal relative risk for any sub-population defined by any covariate configuration $x_U \in \mathcal{I}_U$ is function of the log-mean linear regression coefficients. 
\begin{corollary}\label{cor:heterogenity}
	The  relative risk of a product outcome $Y^D$ for any value $x_U \in \I_U$  is %given by
	\begin{equation}\label{eq:cor}
	RR_{D\mid x_U}=\prod_{D' \subseteq D}\exp  \left\{\alpha_{D'}(w\!=\!1)+\sum_{E \subseteq U} \alpha_{D'\mid E}(w\!=\!1)\right\}, \quad D \subseteq V.
	\end{equation}
\end{corollary}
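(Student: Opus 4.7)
The plan is to extend the derivation used in Theorem 4.1 so that it holds uniformly over $x_U \in \I_U$, rather than only at the baseline level $x_U = 0_U$. The whole argument rests on two ingredients already available in the excerpt: the \Mob-type inversion of the log-mean linear parameterization in Equation \eqref{eq:LML}, and the explicit expression of the treatment effect on $\gamma_{D\mid x_U}$ given in Equation \eqref{eq:LML-causal effect}.

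First, I would invert Equation \eqref{eq:LML} to obtain
\[
\log \mu_{D\mid x_U}(w) \;=\; \sum_{D' \subseteq D} \gamma_{D'\mid x_U}(w), \qquad D \subseteq V,\; w=0,1,\; x_U \in \I_U.
\]
This is the standard \Mob inversion on the Boolean lattice $2^D$, applied within each covariate stratum; it is the same step implicitly used in Lemma 4.1 at $x_U = 0_U$, and it carries over verbatim to an arbitrary $x_U$. Then, using Equation \eqref{eq:RR-mu}, I would compute
\[
\log RR_{D\mid x_U} \;=\; \sum_{D' \subseteq D} \bigl[\gamma_{D'\mid x_U}(w=1) - \gamma_{D'\mid x_U}(w=0)\bigr],
\]
and substitute Equation \eqref{eq:LML-causal effect} term by term, replacing each bracketed difference by $\alpha_{D'}(w=1) + \sum_{E \subseteq U} \alpha_{D'\mid E}(w=1)$. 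Exponentiating the resulting sum over $D' \subseteq D$ immediately yields Equation \eqref{eq:cor}.

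Once the \Mob inversion is granted, the argument is pure book-keeping; the only mildly delicate point is checking that the inversion is legitimate with $x_U$ fixed throughout, since both Equation \eqref{eq:LML} and the regression specification in Equation \eqref{eq:LML-regression} are written separately for each covariate profile. I do not expect any real obstacle here: the lattice identity underlying Lemma 4.1 holds conditionally on $\{X_U = x_U\}$ for every $x_U$, so Corollary 4.1 differs from Theorem 4.1 only in that the baseline treatment-effect term $\alpha_{D'}(w=1)$ is replaced by its full covariate-dependent counterpart $\alpha_{D'}(w=1) + \sum_{E \subseteq U}\alpha_{D'\mid E}(w=1)$, which is precisely the right-hand side of Equation \eqref{eq:LML-causal effect} for a generic $x_U$.
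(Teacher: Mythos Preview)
Your argument is correct and arrives at the same identity as the paper, but the route differs slightly in presentation. The paper starts from the right-hand side of Equation~\eqref{eq:cor}: it uses Equation~\eqref{eq:LML-causal effect} to rewrite each summand $\alpha_{D'}(w=1)+\sum_{E\subseteq U}\alpha_{D'\mid E}(w=1)$ as $\gamma_{D'\mid x_U}(1)-\gamma_{D'\mid x_U}(0)$, then expands each $\gamma$ via Equation~\eqref{eq:LML} into an alternating sum of $\log\mu$'s, obtaining the double sum $\sum_{D'\subseteq D}\sum_{\tilde D\subseteq D'}(-1)^{|D'\setminus\tilde D|}\log RR_{\tilde D\mid x_U}$, and finally invokes the combinatorial cancellation already established in the proof of Theorem~\ref{prop:RR} to collapse this to $\log RR_{D\mid x_U}$. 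You instead start from the left-hand side and invoke the \Mob\ inversion of Equation~\eqref{eq:LML} directly, writing $\log\mu_{D\mid x_U}(w)=\sum_{D'\subseteq D}\gamma_{D'\mid x_U}(w)$ and then substituting Equation~\eqref{eq:LML-causal effect}. Your version is marginally more direct and self-contained, since it names the lattice inversion explicitly rather than re-running the cancellation argument of Theorem~\ref{prop:RR}; the paper's version is more modular, leaning on that theorem's proof as a black box. Mathematically the two are the same identity read in opposite directions.
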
	

Note that, if $\alpha_{D\mid E}(w=1)=0$, $D \subseteq V$ and $E \subseteq  U$, then   causal effects for $Y^D$ are homogeneous,  that is, the   relative risks for sub-populations defined by any $x_U \in I_U$ are all equals to the  relative risk in Equation \eqref{eq:LML-RR}.

\begin{example} \label{ex:ExLML}
In the morphine study with $Y_V(w)=(Y_1(w), Y_2(w))$, we observe two covariates: gender and age. 
We can dichotomize the variable age and construct a vector $X_U=(X_3,X_4)$ of two binary covariates, gender and age.
The log-mean linear regression model for $Y_V(0)\mid X_U$   is given by the following three equations
%$$
%\left. % \{
%\begin{array}{lcl}
%\gamma_{1\mid 34}(w)  & =&  \alpha_1 + \alpha_1(w) + \alpha_{1\mid 3} + \alpha_{1\mid 4}+\alpha_{1\mid 34} +  \alpha_{1\mid 3}(w) + \alpha_{1\mid 4}(w)+\alpha_{1\mid 34}(w) \\
%\gamma_{2\mid 34}(w)  & =&   \alpha_2 + \alpha_2(w) + \alpha_{2\mid 3} + \alpha_{2\mid 4}+\alpha_{2\mid 34} +  \alpha_{2\mid 3}(w) + \alpha_{2\mid 4}(w)+\alpha_{2\mid 34}(w) \\
%\gamma_{12\mid 34}(w)  & =&  
%  \alpha_{12} + \alpha_{12}(w) + \alpha_{12\mid 3} + \alpha_{12\mid 4}+\alpha_{12\mid 34} +  \alpha_{12\mid 3}(w) + \alpha_{12\mid 4}(w)+\alpha_{12\mid 34}(w) \\
%\end{array}%	
%\right.
%$$
$$
\left. % \{
\begin{array}{lclclclcl}
\gamma_{1\mid 34}(w=0)  &=&  \alpha_1    &+&   \alpha_{1\mid 3}  &+&   \alpha_{1\mid 4} &+&  \alpha_{1\mid 34}   \\  
\gamma_{2\mid 34}(w=0)  &=&  \alpha_2    &+& \alpha_{2\mid 3}  &+&   \alpha_{2\mid 4} &+&  \alpha_{2\mid 34}  \\
\gamma_{12\mid 34}(w=0) &=& \alpha_{12}  &+&  \alpha_{12\mid 3}  &+&   \alpha_{12\mid 4} &+&  \alpha_{12\mid 34}
\end{array}%	
\right.
$$
and the log-mean linear regression model for $Y_V(1)\mid X_U$ is given by the following three equations:
$$
\begin{array}{lc lc lc lc lc lc lc l}
%\lefteqn{}
\gamma_{1\mid 34}(w=1)  &\!\!=\!\!&  \alpha_1  &\!\!+\!\!& \alpha_1(w=1) &\!\!+\!\!&
\alpha_{1\mid 3} \,\,+\alpha_{1\mid 4}\,\,\,+ \alpha_{1\mid 34} \,\,+\\
&& \alpha_{1\mid 3}(w=1) &\!\!+\!\!& \alpha_{1\mid 4}(w=1)&\!\!+\!\!&\alpha_{1\mid 34}(w=1) \\
\gamma_{2\mid 34}(w=1) &\!\!=\!\!&   \alpha_2  &\!\!+\!\!& \alpha_2(w=1) &\!\!+\!\!&
\alpha_{2\mid 3} \,\,+\alpha_{2\mid 4}\,\,\,+ \alpha_{2\mid 34} \,\,+\\
 & & \alpha_{2\mid 3}(w=1) &\!\!+\!\!& \alpha_{2\mid 4}(w=1)&\!\!+\!\!&\alpha_{2\mid 34}(w=1) \\
\gamma_{12\mid 34}(w=1)  &\!\!=\!\!& 
\alpha_{12}&\!\!+\!\!& \alpha_{12}(w=1) &\!\!+\!\!& \alpha_{12\mid 3}+ \alpha_{12\mid 4}+\alpha_{12\mid 34}  +  \\
&&
\alpha_{12\mid 3}(w=1) &\!\!+\!\!& \alpha_{12\mid 4}(w=1) &\!\!+\!\!& \alpha_{12\mid 34}(w=1) 
\end{array}
$$ 
The parameters $\alpha_1(w=1)$ and $\alpha_2(w=1)$ are related to the  causal relative risks given the covariates' baseline  level for $Y_1$ and $Y_2$, respectively. Specifically $RR_{1\mid x_U=0}=\exp
\{\alpha_1(w=1)\}$ and $RR_{2\mid x_U=0} = \exp \{\alpha_2(w=1)\}$.  
Consider now the parameter $\alpha_{12}(w=1)$. We have
\begin{eqnarray*}
\alpha_{12}(w=1) &=& \sum_{D' \subseteq D}(-1)^{| D\setminus D'|}\log RR_{D'\mid x_U=0} \\
%\beta_{12}(w=1)- \beta_1(w=1) - \beta_2(w=1)
&=& \log RR_{12\mid x_U=0} -\log RR_{1\mid x_U=0} - \log RR_{2\mid x_U=0}.
\end{eqnarray*}
Therefore, $RR_{12\mid x_U=0}= \exp\{\alpha_{1}(w=1)+\alpha_{2}(w=1)+\alpha_{12}(w=1)\}$.

In this simple example with two binary covariates, for each binary outcome $Y_1$, $Y_2$ and $Y^{\{1,2\}}$, using Corollary \ref{cor:heterogenity}, we can derive four  causal relative risks, depending on the value of the two binary covariates, $X_3$ and $X_4$. For instance, given $x_U=\{x_3=1,x_4=0\}$,
\begin{eqnarray*}
RR_{1\mid x_U} &=& \exp \left\{ \alpha_1(w=1)+ \alpha_{1\mid 3}(w=1) \right\},\\
RR_{2\mid x_U} &=& \exp \left\{ \alpha_2(w=1)+ \alpha_{2\mid 3}(w=1) \right\},\\
RR_{12\mid x_U}&=& \exp \left\{ \alpha_{12}(w=1)+ \alpha_{12\mid 3}(w=1) \right\} \times RR_{1\mid x_U} \times RR_{2\mid x_U}.
\end{eqnarray*}
\end{example}

\subsection{Marginal and joint causal effects using log-mean linear parameters}
We now show how the decomposition of the causal relative risk on a product outcome into its marginal and joint components  naturally follows with a straightforward interpretation, using the log-mean linear regression model in Equation \eqref{eq:LML-regression}.  In particular, we specify the functions in Equations  \eqref{eq:ICE1} and \eqref{eq:ECE1}  in terms  of model parameters.
For simplicity, we  focus on  the baseline level of the covariates, $x_U=0$, but the following reasoning  applies to every level of the covariates $x_U \in \I_U$.
%\begin{definition}\label{def:ECE-ICE}
	Then,  for the baseline level $x_U=0$ of the covariate set $X_U$, respectively, as follows: 
	\begin{eqnarray}
	MCE_{D\mid x_U=0}&=& \prod_{D' \subset D}\exp \{\alpha_{D'}(w=1)\}, \qquad D \subseteq V,\label{eq:ICE}\\
	JCE_{D\mid x_U=0} &=& \exp \{\alpha_{D}(w=1)\}, \qquad D \subseteq V.\label{eq:ECE}
	\end{eqnarray}
%\end{definition}
%
From Equation \eqref{eq:LML-causal effect}, we have that Equation \eqref{eq:ECE} is function of the joint probability of $Y_D(w)\mid \{X_U=0\}$, for $w=0,1$ in accordance with Equation \eqref{eq:ECE1}; in particular, from Lemma \ref{lemma:betaga-betamu} we have that the adopted measure of association is $g[P(Y_D(w)\mid X_U=0)]= \sum_{D' \subseteq D}(-1)^{|D\setminus D'|}\log P[Y_{D'}(w)=1\mid X_U=0]$ for $w=0,1$, and that
\begin{equation}
	JCE_{D\mid x_U=0}=g[P(Y_D(1)\mid X_U=0)]/g[P(Y_D(0)\mid X_U=0)], \quad D \subseteq V.
\end{equation} 
Therefore, $JCE_{D\mid x_U=0}=1$ if there is  no joint effect.   The following lemma shows that  Equation \eqref{eq:ICE} is function of the causal estimands we propose, that is, the causal relative 
risks for nested outcomes,  $Y^{D'}$, with $D' \nes D$, following the general definition in Equation \eqref{eq:ICE1}.
\begin{lemma}\label{lem:ICE}
	Under the log-mean linear regression models in Equation \eqref{eq:LML-regression},  for any $Y^D$, 
	\begin{eqnarray}\label{eq:RRcombination}
	MCE_{D\mid x_U=0}=\Bigg[ \prod_{D' \subset D} RR_{D'\mid x_U=0}^{(-1)^{| D\setminus D'| }} \Bigg]^{-1}, \qquad D \subseteq V.
	\end{eqnarray}
\end{lemma}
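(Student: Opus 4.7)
My plan is to work entirely at the level of logarithms, using the two identities already established in Lemma \ref{lemma:betaga-betamu} and Theorem \ref{prop:RR}, and exploit the fact that Möbius-style sums over $D' \subseteq D$ split cleanly into the term $D' = D$ plus the sum over $D' \subset D$.

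Concretely, I would first take the logarithm of Equation \eqref{eq:ICE} to obtain
\begin{equation*}
\log MCE_{D\mid x_U=0} = \sum_{D' \subset D} \alpha_{D'}(w=1).
\end{equation*}
Next, I would apply Theorem \ref{prop:RR} in the form $\log RR_{D\mid x_U=0} = \sum_{D' \subseteq D}\alpha_{D'}(w=1)$ and split off the top term, yielding the key identity
\begin{equation*}
\log MCE_{D\mid x_U=0} = \log RR_{D\mid x_U=0} - \alpha_D(w=1).
\end{equation*}

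Then I would substitute the expression for $\alpha_D(w=1)$ provided by Lemma \ref{lemma:betaga-betamu}, namely $\alpha_D(w=1) = \sum_{D' \subseteq D}(-1)^{|D\setminus D'|}\log RR_{D'\mid x_U=0}$. The $D' = D$ summand equals $\log RR_{D\mid x_U=0}$ and therefore cancels the first term on the right-hand side, leaving
\begin{equation*}
\log MCE_{D\mid x_U=0} = -\sum_{D' \subset D}(-1)^{|D\setminus D'|}\log RR_{D'\mid x_U=0}.
\end{equation*}
Exponentiating both sides converts the signed sum of logs into the claimed product raised to the reciprocal power, which is exactly Equation \eqref{eq:RRcombination}.

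No obstacle of substance arises; the only things to watch are (i) correctly splitting $D' \subseteq D$ into $D' = D$ together with $D' \subset D$ at two separate places, and (ii) tracking the outer minus sign so that the exponentiated product appears with exponent $-1$ as in the statement. If I wanted an alternative route I could instead plug Lemma \ref{lemma:betaga-betamu} directly into $\log MCE_{D\mid x_U=0} = \sum_{D' \subset D}\alpha_{D'}(w=1)$, switch the order of summation, and evaluate the inner alternating sum $\sum_{D' \subseteq D'' \subset D}(-1)^{|D''\setminus D'|}$ via the binomial identity $\sum_{m=0}^{n-1}\binom{n}{m}(-1)^m = -(-1)^n$; this is more computational but gives the same coefficient $-(-1)^{|D\setminus D'|}$ on $\log RR_{D'\mid x_U=0}$, so it would serve as a useful consistency check.
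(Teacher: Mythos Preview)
Your proposal is correct and is essentially the paper's own argument: both proofs split the sums in Lemma~\ref{lemma:betaga-betamu} and Theorem~\ref{prop:RR} at $D'=D$ to obtain $\sum_{D'\subset D}\alpha_{D'}(w=1)=-\sum_{D'\subset D}(-1)^{|D\setminus D'|}\log RR_{D'\mid x_U=0}$, then identify the left-hand side with $\log MCE_{D\mid x_U=0}$ via Equation~\eqref{eq:ICE} and exponentiate. The only difference is cosmetic ordering of the two substitutions.
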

Equations \eqref{eq:ICE} and \eqref{eq:ECE}   can be extended to every value  of the covariate set $X_U$. Specifically, for every $D \subseteq V$ and $x_U \in \I_U$,
\begin{eqnarray}\label{eq:ICE-xU}
MCE_{D\mid x_U}&=& \prod_{D' \subset D}\exp \left\{\alpha_{D'}(w=1)+\sum_{E \subseteq U} \alpha_{D'\mid E}(w=1)\right\},\\
JCE_{D\mid x_U} &=& \exp \left\{\alpha_{D}(w=1)+ \sum_{E \subseteq U}\alpha_{D\mid E}(w=1)\right\}.\label{eq:ECE-xU}
\end{eqnarray}
Lemma \ref{lem:ICE} can be also generalized for each value of the covariate set $X_U$  using Corollary \ref{cor:heterogenity}.

The  proposition below shows that for every product outcome $Y^D \in Y^{\cal{V}}$, the decomposition of the  causal effect into its marginal and joint components  naturally follows.
\begin{proposition}\label{pro:decomposition}
	Under the log-mean linear regression models in Equation \eqref{eq:LML-regression}, for any product outcome $Y^D$,
	\begin{eqnarray}\label{eq:ICE+ECE}
	RR_{D\mid x_U} &=& JCE_{D\mid x_U} \times MCE_{D\mid x_U}, \qquad D \subseteq V,
	\end{eqnarray}
	given any  $x_U \in \I_U$ of the covariate set $X_U$.
\end{proposition}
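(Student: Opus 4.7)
The plan is to derive the decomposition directly from Corollary \ref{cor:heterogenity} by splitting the product indexed over all non-empty subsets $D' \subseteq D$ into the single term corresponding to $D' = D$ and the remaining product over proper subsets $D' \subset D$. Since the relative risk in Equation \eqref{eq:cor} already takes a multiplicative form over subsets of $D$, the decomposition is essentially a matter of identifying each factor with the quantity defined in Equations \eqref{eq:ICE-xU} and \eqref{eq:ECE-xU}.

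First, I would recall from Corollary \ref{cor:heterogenity} that for every $D \subseteq V$ and every $x_U \in \mathcal{I}_U$,
\begin{equation*}
RR_{D\mid x_U} = \prod_{D' \subseteq D} \exp\!\left\{\alpha_{D'}(w=1) + \sum_{E \subseteq U}\alpha_{D'\mid E}(w=1)\right\}.
\end{equation*}
Then I would isolate the factor associated with $D' = D$ from those associated with $D' \subset D$, obtaining
\begin{equation*}
RR_{D\mid x_U} = \exp\!\left\{\alpha_{D}(w=1) + \sum_{E \subseteq U}\alpha_{D\mid E}(w=1)\right\} \times \prod_{D' \subset D} \exp\!\left\{\alpha_{D'}(w=1) + \sum_{E \subseteq U}\alpha_{D'\mid E}(w=1)\right\}.
\end{equation*}

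Finally, recognizing that the first factor is exactly $JCE_{D\mid x_U}$ by Equation \eqref{eq:ECE-xU} and the second factor is exactly $MCE_{D\mid x_U}$ by Equation \eqref{eq:ICE-xU}, the result $RR_{D\mid x_U} = JCE_{D\mid x_U} \times MCE_{D\mid x_U}$ follows immediately.

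There is no substantive obstacle here: the real work has already been done in Lemma \ref{lemma:betaga-betamu}, Theorem \ref{prop:RR}, and Corollary \ref{cor:heterogenity}, which express the causal relative risk as a product of exponentials of the log-mean linear regression coefficients across all non-empty subsets of $D$. The proposition is essentially a bookkeeping statement that labels the $D' = D$ factor as the joint component and the product over $D' \subset D$ as the marginal component. The only thing to be slightly careful about is that the definitions of $MCE$ and $JCE$ in Equations \eqref{eq:ICE-xU} and \eqref{eq:ECE-xU} are consistent with the general definitions in Equations \eqref{eq:ICE1} and \eqref{eq:ECE1}, which has already been argued in the surrounding text via Lemma \ref{lem:ICE}.
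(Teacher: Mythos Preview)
Your proposal is correct and follows essentially the same approach as the paper: both arguments invoke Corollary~\ref{cor:heterogenity} together with the definitions of $JCE_{D\mid x_U}$ and $MCE_{D\mid x_U}$ in Equations~\eqref{eq:ICE-xU} and~\eqref{eq:ECE-xU}, the only cosmetic difference being that the paper multiplies the two factors and identifies the result with $RR_{D\mid x_U}$, whereas you start from $RR_{D\mid x_U}$ and split off the $D'=D$ term.
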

The case $|D|=1$ is trivial  because $ JCE_{D\mid x_U}=RR_{\emptyset\mid x_U}=1$ and $ RR_{D\mid x_U}= MCE_{D\mid x_U}$.

\begin{corollary}\label{cor:ECE0}
	Under the log-mean linear regression models in Equation \eqref{eq:LML-regression}, for any product outcome $Y^D$, and for any $x_U \in \mathcal{I}_U$,  
	\begin{equation}\label{eq:zero-alpha}
	RR_{D\mid x_U}= MCE_{D\mid x_U}, \qquad D \subseteq V
	\end{equation}
	if and only if  
	\begin{equation}\label{eq:cor3}
	\alpha_{D}(w=1) + \sum_{E \subseteq U}\alpha_{D\mid E}(w=1) = 0, \qquad D \subseteq V
	\end{equation}

\end{corollary}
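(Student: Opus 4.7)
The plan is to derive this corollary as an immediate consequence of Proposition \ref{pro:decomposition} combined with the explicit form of $JCE_{D\mid x_U}$ given in Equation \eqref{eq:ECE-xU}. Fix $D \subseteq V$ and $x_U \in \mathcal{I}_U$. By Proposition \ref{pro:decomposition} we have the multiplicative decomposition
$$RR_{D\mid x_U} = JCE_{D\mid x_U} \times MCE_{D\mid x_U}.$$
Since $MCE_{D\mid x_U}$ is a product of exponentials by Equation \eqref{eq:ICE-xU}, it is strictly positive, so dividing both sides is legitimate and yields the equivalence $RR_{D\mid x_U} = MCE_{D\mid x_U}$ if and only if $JCE_{D\mid x_U} = 1$.

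Next I would substitute the expression from Equation \eqref{eq:ECE-xU}, namely
$$JCE_{D\mid x_U} = \exp\left\{\alpha_D(w=1) + \sum_{E \subseteq U}\alpha_{D\mid E}(w=1)\right\}.$$
Because the exponential function is strictly monotonic with $\exp(t) = 1 \Leftrightarrow t = 0$, the condition $JCE_{D\mid x_U} = 1$ is equivalent to $\alpha_D(w=1) + \sum_{E \subseteq U}\alpha_{D\mid E}(w=1) = 0$. Chaining the two equivalences delivers the desired iff statement for the chosen $D$, and since $D \subseteq V$ was arbitrary the claim follows for every $D$.

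There is no real obstacle here: the corollary just rewrites Proposition \ref{pro:decomposition} in the special regime where the joint component is trivial, and the only care needed is to note the positivity of $MCE_{D\mid x_U}$ and the strict monotonicity of $\exp(\cdot)$ so that the two equivalences can be composed.
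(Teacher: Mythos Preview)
Your proof is correct and follows the same approach as the paper: both invoke Proposition~\ref{pro:decomposition} to reduce the claim to $JCE_{D\mid x_U}=1$ and then use the definition of $JCE_{D\mid x_U}$ in Equation~\eqref{eq:ECE-xU} to translate this into the vanishing of the exponent. Your version is slightly more explicit in justifying the division by $MCE_{D\mid x_U}$ and the use of the injectivity of $\exp$, but the argument is otherwise identical.
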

A special case of Corollary \ref{cor:ECE0}, is for baseline level $x_U=0$ of the covariates, when $RR_{D\mid x_U=0}= MCE_{D\mid x_U=0}$ if and only if $\alpha_{D}(w=1)=0$, for any $D \subseteq V$.
The following corollary shows that  Corollary \ref{cor:ECE0} necessary holds in case of independence.
\begin{corollary}\label{cor:ECE-indip}
	For any potential outcome $Y_D(w)$ with $D \subseteq V$ and $w=0,1$, suppose that there exist a partition $A$ and $B$ of $D$ with $A,B \neq \emptyset$ and $A \cap B=\emptyset$ such that $Y_A(w) $ and $Y_B(w)$
	are independent given  $X_U=x_U$, for any value $x_U \in \mathcal{I}_U$ of the covariates $X_U$. Then, under the log-mean linear regression model in Equation \eqref{eq:LML-regression}, for any product outcome $Y^D$, and for any $x_U \in \mathcal{I}_U$,
	\begin{equation}\label{eq:zero-alpha-indip}
	RR_{D\mid x_U}= MCE_{D\mid x_U}, \qquad D \subseteq V.
	\end{equation}
\end{corollary}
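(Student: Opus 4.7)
The plan is to combine Proposition \ref{pro:decomposition} with a direct argument that the joint causal effect $JCE_{D\mid x_U}$ equals $1$ under the stated conditional independence. By Proposition \ref{pro:decomposition}, $RR_{D\mid x_U} = JCE_{D\mid x_U}\times MCE_{D\mid x_U}$, so it suffices to show $JCE_{D\mid x_U}=1$. Using the expression \eqref{eq:ECE-xU} together with \eqref{eq:LML-causal effect}, this reduces to proving that $\gamma_{D\mid x_U}(w=1)-\gamma_{D\mid x_U}(w=0)=0$, and a clean route is to show the stronger statement that $\gamma_{D\mid x_U}(w)=0$ for each $w\in\{0,1\}$ whenever $Y_A(w)$ and $Y_B(w)$ are conditionally independent given $X_U=x_U$.

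The key calculation is the following. For any $D'\subseteq D=A\cup B$ write $D'_A=D'\cap A$ and $D'_B=D'\cap B$. Conditional independence of $Y_A(w)$ and $Y_B(w)$ given $X_U=x_U$ transfers to every pair of sub-vectors, so
\begin{equation*}
\mu_{D'\mid x_U}(w) = \mu_{D'_A\mid x_U}(w)\,\mu_{D'_B\mid x_U}(w),
\end{equation*}
and hence $\log\mu_{D'\mid x_U}(w)=\log\mu_{D'_A\mid x_U}(w)+\log\mu_{D'_B\mid x_U}(w)$. Substituting this into the definition \eqref{eq:LML} of $\gamma_{D\mid x_U}(w)$ and re-indexing the sum as a double sum over $D'_A\subseteq A$ and $D'_B\subseteq B$ with sign $(-1)^{|A\setminus D'_A|+|B\setminus D'_B|}$, the expression factorizes into a product. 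The $A$-piece carries the factor $\sum_{D'_B\subseteq B}(-1)^{|B\setminus D'_B|}=(1-1)^{|B|}=0$, and symmetrically the $B$-piece is killed by the empty sum over subsets of $A$. Thus $\gamma_{D\mid x_U}(w)=0$ for each $w$.

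Having established $\gamma_{D\mid x_U}(w=1)=\gamma_{D\mid x_U}(w=0)=0$, the identity \eqref{eq:LML-causal effect} gives $\alpha_D(w=1)+\sum_{E\subseteq U}\alpha_{D\mid E}(w=1)=0$, which is exactly the condition \eqref{eq:cor3} of Corollary \ref{cor:ECE0}. Applying that corollary yields \eqref{eq:zero-alpha-indip}. Equivalently, \eqref{eq:ECE-xU} now gives $JCE_{D\mid x_U}=\exp\{0\}=1$, so Proposition \ref{pro:decomposition} delivers $RR_{D\mid x_U}=MCE_{D\mid x_U}$.

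The main obstacle, and the step that deserves the most care, is the Möbius-type vanishing argument: one must verify that independence of $Y_A(w)$ and $Y_B(w)$ genuinely factorizes $\mu_{D'\mid x_U}(w)$ for every $D'\subseteq D$ (not just $D'=D$), which follows because marginals of independent random vectors remain independent. Once this factorization is in place, the cancellation $(1-1)^{|A|}=(1-1)^{|B|}=0$ relies only on $A$ and $B$ being non-empty, and the rest is routine algebra via Corollary \ref{cor:ECE0}.
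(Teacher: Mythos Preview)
Your proof is correct and follows the same overall strategy as the paper: establish $\gamma_{D\mid x_U}(w)=0$ for $w=0,1$ under the stated conditional independence, then invoke Corollary~\ref{cor:ECE0} (equivalently, Proposition~\ref{pro:decomposition} with $JCE_{D\mid x_U}=1$). The only difference is in how the vanishing of $\gamma_{D\mid x_U}(w)$ is obtained: the paper simply cites Theorem~1 of \citet{rov-lup-lar:2013}, whereas you supply a direct \Mob-type computation---factorize $\mu_{D'\mid x_U}(w)$ along the partition $A,B$ and use $\sum_{S\subseteq A}(-1)^{|A\setminus S|}=0$ for non-empty $A$ (and similarly for $B$). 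Your argument is therefore self-contained and slightly more informative, since it makes explicit that only non-emptiness of $A$ and $B$ is needed; the paper's version is terser but relies on an external reference for the same fact. One minor wording point: after substituting the factorized log-means, the expression is a \emph{sum} of two products (one for the $A$-term and one for the $B$-term), not a single product, but you handle each piece correctly.
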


Following Corollary \eqref{cor:ECE-indip}, we remark that independence is a sufficient but not a necessary condition in order to have $RR_{D\mid x_U}=MCE_{D\mid x_U}$, for any $D \subseteq V$ and $x_U \in \mathcal{I}_U$. In particular, Equation \eqref{eq:zero-alpha} holds whenever the treatment  affects  the product  outcome $Y^D$, but has no effect on the association structure of the joint distribution of $Y_D$, $D \subseteq V$. Then, it is possible that  causal effects on marginal outcomes contain all relevant information about the causal effects of the treatment,  and thus the association among outcomes can be neglected,  even in case of non-independent outcomes.

To be thorough, it is worth mentioning another, although less interesting, case, too.  
If $\alpha_{D'}(w=1)=0$ for every $D' \subset D$, for a given $D \subseteq V$ with $|D|>1$, then $MCE_{D\mid x_U=0}=1$, and we get $ RR_{D\mid x_U=0}=  JCE_{D\mid x_U=0}$. Therefore, it could be possible that the treatment has a casual effect on $Y^D$, even if it has no effect on every nested product outcome, $Y^{D'}$, $D' \subset D $. In this case, the casual effect on $Y^D$ would be totally given by the effect of the treatment on the dependence structure of  $Y_D$. %We could reasonably expect that this case is rare, but it could be, however, interesting to investigate the possibility that an effect of the treatment on a $D$-product outcome is only thorough  the effect of the treatment on the dependence structure of $Y_D$, for any $D \subseteq V$.
\begin{example}
	Consider the scenario described in Example \ref{ex:ExLML}, where we have $Y_V(w)=(Y_1(w), Y_2(w))$ and  a vector $X_U=(X_3,X_4)$ of two binary covariates, and the log-mean linear regression model for $Y_V(w)\mid X_U$, $w=0,1$.
	Suppose we are interested in the  relative risk $RR_{\{1,2\}\mid x_U=0}$. 
	This effect can be decomposed into the  marginal causal effect and   the  joint causal effect, which can be written, using the log-mean linear  model coefficients, as follows:
	\begin{eqnarray*}
	MCE_{12\mid x_U=0}&=& \exp \{\alpha_1(w=1)\} \times \exp \{\alpha_2(w=1)\} \\ 
	JCE_{12\mid x_U=0} &=& \exp \{\alpha_{12}(w=1)\}.
	\end{eqnarray*}
	Then, 
	$$
	RR_{12\mid x_U=0}= \exp \{\alpha(w=1)\} \times \exp \{\alpha_2(w=1)\} \times \exp \{\alpha_{12}(w=1)\}.
	$$
	If the logarithm of the joint effect is null, i.e. $\alpha_{12}(w=1)=0$, then causal effects on $Y^{\{1,2\}}$ are only through  causal effects on each marginal outcome, $Y_{1}$ and $Y_{2}$. 
	The same result also holds for every level $x_U \in \I_U$ of the covariates. For instance, consider $x_U=\{x_3=1,x_4=0\}$, then 
	$
	MCE_{12\mid x_U}= \exp \{\alpha_1(w=1)+ \alpha_{1\mid 3}(w=1)\} \times \exp \{\alpha_2(w=1) + \alpha_{2\mid 3}(w=1)\}
	$
	and
	$
	JCE_{12\mid x_U}= \exp\{\alpha_{12}(w=1)+ \alpha_{12\mid 3}(w=1)\}.
	$
	In the log-mean linear regression model, which involves hierarchical effects, $\alpha_{12}(w=1)=0$ implies that $\alpha_{12\mid E}(w=1)=0$ for any $E \subseteq U$, therefore if $\alpha_{12}(w=1)=0$ then   $ JCE_{12\mid x_U}=1$ and $ RR_{12\mid x_U}= MCE_{12\mid x_U}$.
\end{example}

\subsection{Inference} 
For inference, we use a maximum likelihood approach. In particular, maximum likelihood estimators  are obtained by implementing an algorithm inspired on the maximization procedure developed in \citet{lang:1996}, properly adjusted for working out the estimates of the causal effects of interest and the corresponding standard errors. A similar maximization procedure has been  also discussed by \citet{lupparelli:2006} in the context of marginal models. For technical details and a review of further maximization approaches see also \citet{eva-for:2013} and references therein.

\section{Applications} \label{sec:applications}
\subsection{The Morphine Study}
The morphine study is a prospective, randomized, double-blind study concerning the effect of preoperative oral administration of morphine sulphate on postoperative pain relief. The study involved a random sample of $n = 60$ patients aged $18 - 80$ who were undergoing an elective open colorectal abdominal surgery. 
Out of these 60 patients,  32  were randomly assigned to the treatment group, and 28  were randomly assigned to the control group.  Let $W$ denote the observed treatment variable. Before surgery, patients in the treatment group with $W=1$, were administered oral morphine sulphate (Oramorph$^{\textregistered}$, Molteni Farmaceutici, Italy), and patients in the control group with $W=0$, received oral midazolam (Hypnovel$^{\textregistered}$, Roche, Switzerland), a short-acting drug inducing sedation, which is considered as an active placebo. 

The outcome of primary interest is post-operative pain intensity measured using visual analogue scale
 scores at rest and for movement (that is, upon coughing).  Visual analogue scale scores are measured using a line of 100 mm where the left extremity is no pain and the right one is extreme pain. 
Here we focus on pain intensity at rest and for movement 4 hours after the end of surgery \cite[see][for further details on the study]{Borracci2013}.
Physicians consider a pain score not greater than 30 mm at rest, and not greater than 45 mm on movement as a satisfactory pain relief. Therefore we dichotomize the two outcome variables using 30 and 45 as cutoff points for 
pain intensity at rest and for movement, respectively.   Let $Y_1=Y_S$ and $Y_2=Y_{Dy}$ denote the binary indicators for low  versus high visual analogue scale scores  at rest and for movement, respectively.

Under SUTVA,   let $Y_S(w)$ and $Y_{Dy}(w)$  define the potential outcomes for pain intensity at rest and for movement, respectively,  given assignment to treatment level $w$: $Y_S(w)$ and $Y_{Dy}(w)$ are binary variables equal to 1 for patients with visual analogue scale score at rest and for movement not greater that 30 mm and 45 mm, respectively, given assignment $w$, and 0 otherwise. Let $Y_S(W)$ and $Y_{Dy}(W)$ be the actual outcomes observed. 
For each patient we also observe two covariates, gender, $X_G$  ($X_G=0$ for females; and $X_G=1$ for males),  and age in years. We dichotomize the variable age considering a binary variable $X_A$ equal to 1 for patients older than 65 years, and $0$ otherwise. So we get the vector $X_U=(X_A,X_G)$ of two binary covariates.

\begin{table}
	\centering \caption{Maximum likelihood estimates of the log-mean linear regression model for $\{Y_S(w),Y_{Dy}(w)\}\mid \{X_A,X_G\}$ (in brackets the standard errors).}\label{tab.2outcomes}
	\begin{tabular}{lrrrr}
		\hline \vspace{-0.25cm}\\ 
		$Y^D(w)\mid X_U$ &  \multicolumn{1}{c}{$\hat{\alpha}_D$} &  \multicolumn{1}{c}{$\hat{\alpha}_D(w=1)$} & \multicolumn{1}{c}{$\hat{\alpha}_{D\mid A}$} & \multicolumn{1}{c}{$\hat{\alpha}_{D\mid G}$}\\ 
		\hline  \vspace{-0.25cm}\\ 
		$Y^{S}(w)\mid X_U$ & -1.600 (0.343) & 1.008 (0.325) & 0.205 (0.209) & 0.270 (0.195)\\
		$Y^{Dy}(w)\mid X_U$ & -2.437 (0.560) & 1.122 (0.498)  & 0.700 (0.439) & 0.024 (0.330)\\
		$Y^{\{S,Dy\}}(w)\mid X_U$ & 1.148 (0.399)  & -0.592 (0.381) & -0.178 (0.204) & -0.269 (0.194)\\  
		
		\hline 
	\end{tabular}
\end{table}
Let us consider the log-mean linear regression models for  $\{Y_S(0),Y_{Dy}(0)\}\mid \{X_A,X_G\}$ and 
$\{Y_S(1),Y_{Dy}(1)\}\mid \{X_A,X_G\}$  with no-interaction terms: $\alpha_{D\mid A,G}$ $=0$, and $\alpha_{D\mid E}(w=1)=0$, for each $D  \subseteq \{S,Dy\}$ and $E\in \{A,G\}$:
\begin{eqnarray*}
\gamma_{S\mid A,G}(w=0)   &=& \alpha_S      +  \alpha_{S\mid A} + \alpha_{S\mid G}, \\
\gamma_{Dy\mid A,G}(w=0)  &=& \alpha_{Dy}   + \alpha_{Dy\mid A} + \alpha_{Dy\mid G},\\
\gamma_{S,Dy\mid A,G}(w=0)&=& \alpha_{S,Dy} + \alpha_{S,Dy\mid A} + \alpha_{S,Dy\mid G}
\end{eqnarray*}%	
and
\begin{eqnarray*}
\gamma_{S\mid A,G}(w=1) &=& \alpha_S +   \alpha_S(w=1) + \alpha_{S\mid A} +\alpha_{S\mid G}, \\  
\gamma_{Dy\mid A,G}(w=1) &=& \alpha_{Dy} + \alpha_{Dy}(w=1)  + \alpha_{Dy\mid A} + \alpha_{Dy\mid G},\\
\gamma_{S,Dy\mid A,G}(w=1)&=& \alpha_{S,Dy}+ \alpha_{S,Dy}(w=1)    + \alpha_{S,Dy\mid A} + \alpha_{S,Dy\mid G}
\end{eqnarray*}%	
This model specification implies that treatment effects are homogeneous across sub-populations defined by the values of the two covariates. Therefore, in the following, we do not need to specify the conditioning set, e.g. $x_U=0$, for any causal estimand. 

The model has 12 degrees of freedom and gives a good fitting with deviance $11.753$ ($p$-value$= 0.466$),  and $BIC= 332.807$. 
The estimates in Table \ref{tab.2outcomes} show a positive causal effect of treatment for both single outcomes. 
%$$
%\hat{RR}_{S\mid x_U}= \exp(1.008)=2.740, \quad \hat{RR}_{Dy\mid x_U}= \exp(1.122)=3.071.
%$$ 
%For the product outcome $Y^{\{S,Dy\}}$, the estimated marginal  and joint causal effects are
%$$
%\hat{ICE}_{\{S,Dy\}\mid x_U}=2.740 \times 3.071=8.415 \qquad \hat{ECE}_{\{S,Dy\}\mid x_U}= \exp(-0.592)=0.553.
%$$ 
%Therefore, the  estimated  causal effect on $Y^{\{S,Dy\}}$ is
%$$
%\hat{RR}_{\{S,Dy\}\mid x_U}= 8.415 \times 0.553=4.653.
%$$
%
However, the joint causal effect is not statistically significant, suggesting that the treatment has not effect on the dependence structure between pain intensity at rest and on movement. The effect of the two covariates is not significant, too. Therefore we repeat the analysis setting  $\alpha_{D\mid A}=\alpha_{D\mid G}=0$ for any $D \subseteq \{S,Dy\}$ and with $\alpha_{\{S,Dy\}}(w=1)=0$. This model has 19 degrees of freedom and gives a good fitting with deviance $18.775$, $p$-value $= 0.471$, and $BIC= 311.168$. Notice that the same model with $\alpha_{\{S,Dy\}}(w=1) \neq 0$ still gives a not significant estimate for this parameter, i.e., $\hat{\alpha}_{\{S,Dy\}}(w=1)=-0.617$ $(se=0.399)$.
The estimates collected in Table \ref{tab.2outcomes_reduced} still show a positive causal effect of treatment on both single outcomes, i.e., $\hat{RR}_{S\mid x_U}= \exp(0.987)=2.683$, and $\hat{RR}_{Dy\mid x_U}= \exp(1.279)=3.593$. For the product outcome $Y^{\{S,Dy\}}$, the total causal effect coincides with the marginal causal effect, because   the logarithm of the joint causal effect is assumed to be  null: $\log(JCE_{\{S,Dy\}\mid x_U})= 0$, and thus $
\hat{RR}_{\{S,Dy\}\mid x_U}= \hat{MCE}_{\{S,Dy\}\mid x_U}= 2.683 \times 3.593=9.640.$
\begin{table}
	\centering\caption{Maximum likelihood estimates of the log-mean linear regression model for $\{Y_S(w),Y_{Dy}(w)\}\mid \{X_A,X_G\}$ with zero constraints denoted by - (in brackets the standard errors).}\label{tab.2outcomes_reduced}
	\begin{tabular}{l|rrrr}\hline
		\vspace{-0.25cm}\\
		$Y^D(w)\mid X_U$ &  \multicolumn{1}{c}{$\hat{\alpha}_D$} &  \multicolumn{1}{c}{$\hat{\alpha}_D(w=1)$} & \multicolumn{1}{c}{$\hat{\alpha}_{D\mid A}$} & \multicolumn{1}{c}{$\hat{\alpha}_{D\mid G}$}\\
		\hline	\vspace{-0.25cm}\\
		$Y^{S}(w)\mid X_U$ & -1.310 (0.297) & 0.987 (0.313) & - & -\\
		$Y^{Dy}(w)\mid X_U$ & -2.054 (0.459) & 1.279 (0.494)  & - & -\\
		$Y^{\{S,Dy\}}(w)\mid X_U$ & 0.302 (0.386)  & - & - & -\\ \hline 
	\end{tabular}
\end{table}

\subsection{The Honey data study}
We consider a double-blinded randomized study taken by  \citet{paul-al:2007}, 
where the focus is on evaluating  the effects of a single nocturnal dose of buckwheat honey or honey-flavored dextromethorphan  versus no treatment on nocturnal cough and sleep difficulty associated with childhood upper respiratory tract infections.  A sample of patients aged between 2 and 18 years with cough attributes characterized by the presence of rhinorrhea and cough for 7 or fewer days duration have been enrolled. Subjective parental assessments about their child cough symptoms  were assessed both previous and after the treatment administration   \cite[see][for details]{paul-al:2007}. The primary outcomes of interest are cough bothersome (`How bothersome was your child's coughing last night?'), cough frequency (`How frequent was your child's cough last night?') and cough severity  (`How severe was your child's cough last night?'). These outcomes are measured on a 7-point Likert scale from 0 (\lq not at all') to 6 (\lq extremely').
Using univariate statistical techniques, \citet{paul-al:2007}  found that honey may be a preferable treatment. Significant differences in symptom improvement were not detected between dextromethorphan and no treatment or between dextromethorphan and honey for any outcome of interest.

We apply the model-based approach described in Section \ref{sec:models}  for drawing inference on the causal effect of honey on children respiratory infection due to considering different combinations of three selected cough attributes: bothersome, frequency and severity. We dichotomize these variables merging levels 0 to 2 in level 1; and levels 3 to 6 in level 0.
We focus on the sub-sample of 72 children receiving honey or no treatment (ignoring children receiving dextromethorphan): 35 children were randomly assigned to the honey treatment $(W=1)$ and 37 children were randomly assigned to no treatment  $(W=0)$.    Then, under SUTVA, for each patient we get a vector $Y_V(w)=(Y_B(w),Y_F(w),Y_S(w))$ of three potential outcomes given assignment $w$, $w=0,1$. The variables $Y_B(w)$, $Y_F(w)$, and $Y_S(w)$ respectively take  value 1 in case of absent or low bothersome, cough frequency and severity, and 0 otherwise. 
We are also interested on the causal effect of honey on the product  outcomes $Y^D$ included in the augmented vector of $\YV$ for every $D \subseteq V$.  Each product outcome represents a combinations of different cough symptoms, which may jointly occur to children, e.g., $Y^{B,F}(w)=1$  for children with absent or low cough bothersome and frequency under treatment $w$, $w=0,1$.

\begin{table}[t]
	\centering\caption{Maximum likelihood estimates of the  log-mean linear regression model for $\{Y_F(w),Y_S(w),Y_{B}(w)\}\mid X$ (in brackets the standard errors).}\label{tab.honey}
	\begin{tabular}{l|rrr}
		\hline
		\vspace{-0.25cm}\\
		$Y^D(w)\mid X$ &  \multicolumn{1}{c}{$\hat{\alpha}_D$} &  \multicolumn{1}{c}{$\hat{\alpha}_D(w=1)$} & \multicolumn{1}{c}{$\hat{\alpha}_{D\mid X}$}\\
		\hline	\vspace{-0.25cm}\\
		$Y^B(w)\mid X$ & -1.050 (0.212) & 0.599 (0.205)  &  0.289 (0.170)\\
		$Y^F(w)\mid X$ & -1.468 (0.265) & 0.754 (0.240) &  0.602 (0.209)\\
		$Y^S(w)\mid X$ & -1.386 (0.254) & 0.532 (0.241) &  0.601 (0.209) \\
		$Y^{\{B,F\}}(w)\mid X$ & 0.875 (0.201)  & -0.502 (0.192)  & -0.270 (0.155)\\
		$Y^{\{B,S\}}(w)\mid X$ & 0.953 (0.201)  & -0.517 (0.192)  & -0.341 (0.161)\\
		$Y^{\{F,S\}}(w)\mid X$ & 1.211 (0.235) & -0.585 (0.211) & -0.523 (0.195)\\
		$Y^{\{B,F,S\}}(w)\mid X$ & -0.792 (0.195)  & 0.423 (0.185) &  0.269 (0.154)\\\hline
	\end{tabular}
\end{table}

We also consider an individual covariate $X$ obtained combining pre-treatment knowledge. Following \citet{paul-al:2007}, we built up a variable by summing the individual scores observed before the treatment about cough frequency, severity and bothersome such that we get a pre-treatment discrete indicator ranging from 0 to 18, which 
we dichotomize with respect to its (sample) median equal to 12.  The resulting binary variable $X$ takes on level 1 for values lower than the median and 0 otherwise.

We specified a log-mean linear regression model for $Y_V(w)\mid X$ assuming no treatment effect heterogeneity across sub-populations defined by the pre-treatment covariate $X$: $\alpha_{D\mid X}(w=1)=0$, for each $D \subseteq V$. This model implies that causal effects are homogeneous  between children with a high  pre-treatment health score and   children with a low  pre-treatment health score. 

The model shows a good fitting with 7 degree of freedom, deviance $10.093$ and $p$-value$=0.183$.  Parameter estimates are collected in Table \ref{tab.honey} and estimates of the causal effects are shown in Table \ref{tab.honey.effects}. 
We get  positive estimates of the honey causal effects on each single outcome,  but the  strongest effect is on reducing the cough frequency.
Treatment has also  a positive effect on product outcomes, improving  conditions of children suffering from  combinations of different cough symptoms. In particular the treatment appears to be more effective when the symptoms includes cough frequency.  All  joint effects are statistically significant, suggesting that the treatment has an effect on the association structure between the outcomes. The  strongest joint effect is found for  the product outcome $Y^{\{B,F,S\}}$, which combines critical cough frequency, severity and bothersome.
The causal relative risk on the product-potential outcomes $Y^D$ with $D \subseteq V$ are derived   combining   joint  and   marginal effects: for instance, $\hat{RR}_{\{B,F\}\mid x}= 0.606 \times 3.885=2.354$. 

Unlike the application on  the morphine data,  in this case we find strong evidence that the analysis cannot be conducted separately on each single outcome because for each product outcome $Y^D$ with $D \subseteq V$, the treatment effect on the association among single  outcomes $Y_v$ with $v \in D$ cannot be ignored.  Then, a multivariate approach is definitively more suitable than an univariate one.

\begin{table}[t]
		\centering \caption{Honey Study: Estimates of the causal effects (in brackets the standard errors)}\label{tab.honey.effects}
		\begin{center}
			\begin{tabular}{lc c l c}\hline
			Estimand       &  Estimate   & &Estimand       &  Estimate\\
			\hline
			\multicolumn{5}{c}{Causal Relative Risks}\\
			$RR_{B\mid x}$ & 1.820 (0.372) & & $RR_{\{B,F\}\mid x}$   & 2.341 (0.614)\\
			$RR_{F\mid x}$ & 2.125 (0.511) & & $RR_{\{B,S\}\mid x}$   & 1.847 (0.475) \\
			$RR_{S\mid x}$ & 1.702 (0.410) & & $RR_{\{F,S\}\mid x}$   & 2.015 (0.552) \\
			&               & &$RR_{\{B,F,S\}\mid x}$  & 2.022 (0.561) \\
			\\
			\multicolumn{2}{l}{ Joint Causal Effects } &&\multicolumn{2}{l}{Marginal Causal Effects}\\        
			$JCE_{\{B,F\}\mid x}$   & 0.605 (0.116) & &  $MCE_{\{B,F\}\mid x}$ & 3.869 (1.584)  \\
			$JCE_{\{B,S\}\mid x}$   & 0.596 (0.115) & &  $MCE_{\{B,S\}\mid x}$ & 3.099 (1.282) \\
			$JCE_{\{F,S\}\mid x}$   & 0.557 (0.118) & &  $MCE_{\{F,S\}\mid x}$ & 3.618 (1.630)\\
			$JCE_{\{B,F,S\}\mid x}$ & 1.527 (0.283) & &  $MCE_{\{B,F,S\}\mid x}$ & 1.324 (0.326)  \\   
			\vspace{-0.3cm}\\
			\hline
		\end{tabular}
	\end{center}
\end{table}

\section{Conclusion}\label{sec:conclusion}
Causal inference in the presence of multiple non-independent outcomes represents a challenging task for several reasons. In particular, an augmented set of outcomes needs to be considered because also the joint occurrence of combinations of outcomes becomes of interest. These \lq\lq new quantities\rq\rq\ need to be formalized together with an enlarged set of causal estimands, including the effect of treatment on combinations of outcomes.
We formalize  these concepts for binary  outcome variables by introducing the notion of product outcomes and by decomposing the treatment effect on these outcomes into the joint and marginal causal effects.  
A general definition for the marginal and the joint causal estimands   has been introduced,
although their specification necessarily requires to make some decisions, for instance, by introducing  modeling assumptions.

We propose to model the joint distribution of potential outcomes  using the class of log-mean linear regression models proposed by \citet{lup-rov:2016}.
Interestingly, the parameters of the resulting model are directly related to  the causal estimands of interest, and the analytic decomposition into marginal and joint effect naturally arises. 

Further  approaches for multiple binary responses may be also explored, such as the multivariate logistic regression of \citet{glo-mcc:1995}.
Nevertheless, we deem that the class of log-mean linear regressions is particularly appealing when the causal estimand of interest are causal relative risks. 

\appendix

\section{Appendix: Proofs}\label{app}

\begin{proof}[Proof of Lemma \ref{lemma:betaga-betamu}]
	Let us consider the causal effect in Equation \eqref{eq:LML-causal effect} for the baseline level $x_U=0$ of the covariates. For any $D \subseteq V$,
	$$
	\alpha_D(w=1)= \gamma_{D\mid x_U=0}(w=1)-\gamma_{D\mid x_U=0}(w=0).
	$$
	Then, from Equation \eqref{ eq:LML} we have that, for any $D \subseteq V$,
	$$
	\alpha_D(w=1)= \sum_{D' \subseteq D}(-1)^{|D \setminus D'|} \log \mu_{D'\mid x_U=0}(1)-\sum_{D' \subseteq D}(-1)^{|D \setminus D'|} \log \mu_{D'\mid x_U=0}(0).
	$$
	It follows that 
	$$
	\alpha_D(w=1) =	    \sum_{D' \subseteq D}(-1)^{|D\setminus D'|}\log RR_{D'\mid x_U=0}, \qquad D \subseteq V
	$$
	because 
	$\log RR_{D'\mid x_U=0}=\log\mu_{D'\mid x_U=0}(1)- \log \mu_{D'\mid x_U=0}(0)$.
\end{proof}

\begin{proof}[Proof of Theorem \ref{prop:RR}]
	First of all we remark that, given any set $D$, the power set $\mathcal{P}(D)=\{D'\}$ includes the same number $2^{|D|-1}$  of even and odd subsets, i.e., $\sum_{D' \subseteq D}(-1)^{|D\setminus D'|}=0$. 
	
	From Lemma \ref{lemma:betaga-betamu} we have that  
	\begin{equation}\label{eq:prop-bis}
	\sum_{D' \subseteq D}\alpha_{D'}(w=1)= \sum_{D' \subseteq D}\left\{\sum_{\tilde{D} \subseteq D'}(-1)^{|D'\setminus \tilde{D}|}\log RR_{\tilde{D}\mid x_U=0}\right\}, \qquad D \subseteq V.
	\end{equation}
	Equation \eqref{eq:prop-bis}  is equivalent to
	\begin{equation}\label{eq:prop-ter}
	\log RR_{D\mid x_U=0}+ \sum_{D' \subseteq D}\left\{\sum_{\tilde{D} \subseteq D':\tilde{D} \neq D}(-1)^{|D'\setminus \tilde{D}|}\log RR_{\tilde{D}\mid x_U=0}\right\},   D \subseteq V.
	\end{equation}
	The second addend in Equation \eqref{eq:prop-ter} is null if  for every $\tilde{D} \subset D$, among all supersets $D' \subseteq D$  such that $D' \supseteq \tilde{D}$, there is the same number of even and odd subsets $D' \setminus \tilde{D}$. 
	Notice that the case $\tilde{D} = \emptyset$ is trivial because $\log RR_{\emptyset\mid x_U=0}=0$ by definition. 
	
	Now consider any non-empty subset $\tilde{D} \subset D$. This is included in all supersets $D' \subseteq D$ of type $D'=\tilde{D} \cup K$ for any $K \in \mathcal{P}(D'\setminus \tilde{D})$ which is a power set having the same number of even and odd subsets. Then, the  result follows because
	\begin{equation}
	\sum_{D' \subseteq D}\left\{\sum_{\tilde{D} \subseteq D':\tilde{D} \neq D}(-1)^{|D'\setminus \tilde{D}|}\log RR_{\tilde{D}\mid x_U=0}\right\} =0, \qquad D \subseteq V.
	\end{equation}
	\hspace{1cm}
\end{proof}

\begin{proof}[Proof of Corollary \ref{cor:heterogenity}]
	From Equation \eqref{ eq:LML-causal effect} we have
	\begin{equation}\label{eq:cor1}
	\alpha_{D'}(w=1)+ \sum_{E \subseteq U} \alpha_{D'\mid E}(w=1)=\gamma_{D'\mid x_U}(w=1)-\gamma_{D'\mid x_U}(w=0), \quad D' \subseteq D,
	\end{equation}
	for every $D \subseteq V$ and $x_U \in \mathcal{I}_U$. From Equation \eqref{ eq:LML}, Equation \eqref{eq:cor1} is equal to
	\begin{equation}
	\sum_{\tilde{D}\subseteq D'} (-1)^{|D'\setminus \tilde{D}|} \log \mu_{\tilde{D}\mid x_U}(w=1)- \sum_{\tilde{D} \subseteq D'} (-1)^{|D'\setminus \tilde{D}|} \log \mu_{\tilde{D}\mid x_U}(w=0)
	\end{equation}
	which, by Equation \eqref{ eq:RR-mu}, is also equal to $\sum_{\tilde{D} \subseteq D'} (-1)^{|D' \setminus \tilde{D}|} \log RR_{\tilde{D}\mid x_U}$, for any $x_U \in \mathcal{I}_U$.
	Therefore, we have 
	\begin{eqnarray}
	\lefteqn{\sum_{D' \subseteq D}  \left\{\alpha_{D'}(w\!=\!1)+  \sum_{E \subseteq U} \alpha_{D'\mid E}(w\!=\!1)\right\}  = }\\&&
	\sum_{D' \subseteq D}\left\{\sum_{\tilde{D} \subseteq D'}(-1)^{|D'\setminus \tilde{D}|} \log RR_{\tilde{D}\mid x_U}\right\}, \quad D \subseteq V, \nonumber
	\end{eqnarray}
	for any $x_U \in \mathcal{I}_U$. The result follows applying the  proof of Theorem \ref{prop:RR}.
\end{proof}	

\begin{proof}[Proof of Lemma \ref{lem:ICE}]
	From Equation \eqref{eq:LML-alpha} and \eqref{eq:LML-RR} we have, respectively, 
	$$
	\alpha_D(w=1)= \log RR_{D\mid x_U=0} + \sum_{D' \subset D}(-1)^{|D\setminus D'|} \log RR_{D'\mid x_U=0}
	$$
	and
	$$
	\log RR_{D\mid x_U=0} = \alpha_D(w=1) + \sum_{D' \subset D} \alpha_{D'}(w=1),
	$$
	for any $D \subseteq V$. Then, the result follows given that, for any $D \subseteq V$,
	$$\sum_{D' \subset D} \alpha_{D'}(w=1)=-\sum_{D' \subset D}(-1)^{|D\setminus D'|} RR_{D'\mid x_U=0}.
	$$
	\hspace{1cm}
\end{proof}

\begin{proof}[Proof of Proposition \ref{pro:decomposition}]
	For every $D \subseteq V$ and any  $x_U \in \mathcal{I}_U$, the result follows by applying Corollary \ref{cor:heterogenity} to the product of $JCE_{D\mid x_U}$ and $MCE_{D\mid x_U}$, as respectively defined in Equations \eqref{eq:ICE-xU} and \eqref{eq:ECE-xU}.
\end{proof}

\begin{proof}[Proof of Corollary \ref{cor:ECE0}]
	Suppose that Equation \eqref{eq:zero-alpha} holds. Then  Proposition \ref{pro:decomposition} implies that 
	$JCE_{D\mid x_U}=1$ for any $D \subseteq V$, and thus Equation \eqref{ eq:cor3} holds given the definition of 
	$JCE_{D\mid x_U}$ in Equation \eqref{eq:ECE-xU}.
	Vice versa, if Equation \eqref{eq:cor3} holds, then  $JCE_{D\mid x_U}=1$ for any $D \subseteq V$, given the definition of 
	$JCE_{D\mid x_U}$ in Equation \eqref{eq:ECE-xU}. Equation \eqref{eq:zero-alpha} follows  from  Proposition \ref{pro:decomposition}.
\end{proof}	
\begin{proof}[Proof of Corollary \ref{cor:ECE-indip}]
	Suppose that for each potential outcome $Y_D(w)$ with $D \subseteq V$ and $w=0,1$,
	there exist a partition $A$ and $B$ of $D \subseteq V$ such that $Y_A(w) $ and $Y_B(w)$,
	are conditionally independent given  $X_U=x_U$,  $x_U \in \mathcal{I}_U$.
	Then, we have that $\gamma_{D\mid x_U}(0)=\gamma_{D\mid x_U}(1)=0$ for Theorem 1 in \citet{rov-lup-lar:2013}. Then the result follows because, under the log-mean linear regression model in Equation \eqref{eq:LML-regression}, condition in Equation \eqref{eq:cor3} is verified.	
\end{proof}	

\section*{Acknowledgments}
We are grateful to  Ian Michael Paul  and  Tonya Sharp King  (Penn State University College of Medicine, USA)  
for providing us the data on the honey study, and to Fabio Picciafuochi (Azienda USL, Reggio Emilia, Italy) for providing us the data on  the morphine study. We also thank Luca La Rocca, Fabrizia Mealli and Alberto Roverato for helpful discussions. Alessandra Mattei acknowledges financial support from the Italian Ministry of Research and Higher Education through grant Futuro in Ricerca 2012  RBFR12SHVV$\!$\underline{$\;\,$}003.\vspace*{-8pt}

\bibliographystyle{chicago}
\bibliography{gamma-ref}
\end{document}